\newcommand{\E}{\text{E}}
\newtheorem{thm}{Theorem}
\newcommand{\vs}{\vspace{0.1cm}}
\def\@copyrightspace{\relax}
\begin{document}\sloppy
\title{Recommendation Subgraphs for Web Discovery\thanks{This work was supported in part by an internship at BloomReach Inc.}}

\numberofauthors{3}
\author{
\alignauthor
Arda Antikacioglu~\thanks{Supported in part by NSF CCF-1347308} \\
       \affaddr{Department of Mathematics}\\
       \affaddr{Carnegie Mellon University}\\
       \email{aantikac@andrew.cmu.edu}
\alignauthor
R. Ravi~\thanks{Supported in part by NSF CCF-1347308}  \\
       \affaddr{Tepper School of Business}\\
       \affaddr{Carnegie Mellon University}\\
       \email{ravi@cmu.edu}
\alignauthor
Srinath Sridhar \\
       \affaddr{BloomReach Inc.}\\
       \email{srinath@bloomreach.com}
}

\maketitle
\abstract

Recommendations are central to the utility of many websites including
YouTube, Quora as well as popular e-commerce stores. Such sites
typically contain a set of recommendations on every product page that 
enables visitors to easily navigate the website. Choosing an
appropriate set of recommendations at each page is one of the key 
features of backend engines that have been deployed at several e-commerce
sites.\vs

Specifically at BloomReach, an engine consisting of several
independent components analyzes and optimizes its clients
websites. This paper focuses on the structure optimizer component
which improves the website navigation experience that enables the
discovery of previously undiscovered content.\vs

We begin the paper by formalizing the concept of recommendations used
for discovery. We formulate this as a natural graph optimization
problem which in its simplest case, reduces to a bipartite matching
problem. In practice, solving these matching problems requires
superlinear time and is not scalable. Also, implementing simple
algorithms is critical in practice because they are significantly
easier to maintain in a production software package. This motivated us
to analyze three methods for solving the problem in increasing order
of sophistication: a local random sampling algorithm, a greedy algorithm
and a more involved partitioning based algorithm. \vs

We first theoretically analyze the performance of these three methods
on random graph models characterizing when each method will yield a
solution of sufficient quality and the parameter ranges when more 
sophistication is needed. We complement this by providing an empirical
analysis of these algorithms on simulated and real-world production
data. Our results confirm that it is not always necessary to implement
complicated algorithms in the real-world. Indeed, our results
demonstrate that very good practical results can be obtained by 
using simple heuristics that are backed by the confidence of concrete
theoretical guarantees. \vs
\setcounter{page}{1}
\section{Introduction}

\subsection{Web Relevance Engines}
The digital discovery divide~\cite{WebRelevanceEngine} refers to the problem of companies not being able to present users with what they seek in the short time they spend looking for this information. The problem is prevalent not only in e-commerce websites but also in social networks and micro-blogging sites where surfacing relevant content quickly is important for user engagement. \vs

BloomReach is a big-data marketing company that uses the client's content as well as web-wide data to optimize both customer acquisition and satisfaction for e-retailers.
BloomReach's clients include popular retailers like Nieman Marcus, Crate \& Barrel, Williams-Sonoma and Staples besides many others. In this paper, we describe the structure optimizer component of BloomReach's Web Relevance Engine. This component works on top of the recommendation engine so as to carefully add a set of links across pages that ensures that users can efficiently navigate the entire website.

\subsection{Structure Optimization of Websites}

One of the great benefits of the web as a useful source of hyperlinked
information comes from the careful choices made in crafting the
recommendations that link a page to closely related pages. Though this
advantage was identified well before the WWW was in place by
Bush~\cite{Bush45aswe}, it continues to persist even today.
Recent estimates~\cite{big-data-book13} attribute up to a third of the sales
on Amazon and three-quarters of new orders on Netflix to users that are
influenced by the carefully chosen recommendations provided to them. \vs

Even though recommendations exist across the entire web, we provide some simple concrete examples. First, YouTube has a section that displays all the related videos for every main video being viewed.
Quora has a section for questions related to the main question that is displayed. These recommendations are critical in determining how the traffic across all of YouTube or Quora is going to flow.
An important concern of website owners is whether a significant fraction of the site is not recommended at all (or `hardly' recommended) from other more popular pages. Continuing
with the above example, if a large fraction of the YouTube videos were not recommended from any (or few) other videos, then millions of great videos will lie undiscovered. One way to address this
problem is to try to ensure that every page will obtain at least a baseline number of visits so that great content does not remain undiscovered, and thus bridge the discovery divide mentioned above. \vs

We use the criterion of discoverability as the objective for the choice of the links to recommend. Consequently, we get a new formulation of the recommendation selection problem that is structural. In particular, we think of commonly visited pages in a site as the already discovered pages, from which there are a large number of possible recommendations available to related but less visited peripheral pages. The problem of choosing a limited number of pages to recommend at each discovered page can be cast with the objective of maximizing the number of peripheral non-visited pages that are linked. We formulate this as a recommendation subgraph problem, and study practical algorithms for solving these problems in real-life data. \vs

\subsection{Recommendation Systems as a Subgraph Selection Problem}

Formally, we divide all pages in a site into two groups: the discovered pages and the undiscovered ones.
Furthermore, we assume that recommendation systems~\cite{Schafer1999, Adomavicius2005,
  Resnick1997} find a large set of related candidate undiscovered page recommendations
for each discovered page using relevance. In this work, we assume $d$
such related candidates are available per page creating a candidate recommendation bipartite graph 
(with degree $d$ at each discovered page node).
Our goal is to analyze how to prune this set to $c < d$ recommendations such that
globally we ensure that the resulting recommendation subgraph can be navigated efficiently by the user to enable better discovery. \vs

\subsection{Our Contributions}
While optimal solutions to some versions of the recommendation subgraph problem can be obtained by using a maximum matching algorithm, such algorithms are too costly to run on real-life instances. We introduce three simple alternate methods that can be implemented in linear or near-linear time and examine their properties. 
In particular, we delineate when
each method will work effectively on popular random graph models, and when a practitioner will need to employ a more sophisticated algorithm. 
We then evaluate how these simple methods perform on simulated data, both in terms of solution quality and  running time.
Finally, we show the deployment of these methods on BloomReach's real-world client link graph and measure their
actual performance in terms of running-times, memory usage and accuracy. \vs

To summarize, our contributions are as follows.
\begin{enumerate}
\item The development of a new structural model for recommendation systems as a subgraph selection problem for maximizing discoverability, 
\item The proposal of three methods with increasing sophistication to solve the problem at scale along with associated theoretical performance guarantee analysis for each method, and
\item An empirical validation of our conclusions with simulated and real-life data.
\end{enumerate}

\section{Related Work}

Recommendation systems have been studied extensively in literature, especially since the advent of the web. Most recommendation systems can be broadly separated into two different groups: collaborative filtering systems and content-based recommender systems \cite{almazro2010survey}. Much attention has been focused on the former approach, where either users are clustered by considering the items they have consumed or items are clustered by considering the users that have bought them. Both item-to-item and user-to-user recommendation systems based on collaborative filtering have been adopted by many industry giants such as Twitter \cite{twitter-collab-filtering}, Amazon \cite{amazon-collab-filtering} and Google \cite{google-collab-filtering}.  \vs

Content based systems instead look at each item and its intrinsic properties. For example, Pandora has categorical information such as Artist, Genre, Year, Singer, Tempo etc. on each song it indexes. Similarly, Netflix has a lot of categorical data on movies and TV such as Cast, Director, Producers, Release Date, Budget, etc. This categorical data can then be used to recommend new songs that are similar to the songs that a user has liked before. Depending on user feedback, a recommender system can learn which of the categories are more or less important to a user and adjust its recommendations. \vs

A drawback of the first type of system is that is that they require multiple visits by many users so that a taste profile for each user, or a user profile for each item can be built. 
Similarly, content-based systems also require significant user participation to train the underlying system. These conditions are possible to meet for large commerce or entertainment hubs such as the companies mentioned above, but not very likely for most online retailers that specialize in a just a few areas, but have a long-tail~\cite{Anderson2006} of product offerings. \vs

Because of this constraint, in this paper we focus on a recommender system that typically uses many different algorithms that extract categorical data from item descriptions and uses this data to establish weak links between items (candidate recommendations). In the absence of data that would enable us to choose among these many links, we consider every potential recommendation to be of equal value and focus on the objective of discovery, which has not been studied before. In this way, our work differs from all the previous work on recommendation systems that emphasize on finding recommendations of high relevance and quality rather than on structural navigability of the realized link structure. However, while it's not included in this paper for brevity, some of our approaches can be extended to the more general case where different recommendations have different weights. \vs
\section{Our Model}

We model the structure optimization of recommendations by using a bipartite
digraph, where one partition $L$ represents the set of discovered
(i.e., often visited) items for which we are required to suggest recommendations and the other partition $R$
representing the set of undiscovered (not visited) items that can be potentially recommended. If
needed, the same item can be represented in both $L$ and $R$.
\vs

\subsection{The Recommendation Subgraph Problem}
We introduce and study this as the {\bf the $(c, a)$-recommendation subgraph problem} in this paper:
{\em
 The input to the problem is the graph where each
$L$-vertex has $d$ recommendations. Given the space restrictions to
display recommendations, the output is a subgraph where each vertex in
$L$ has $c < d$ recommendations. The goal is to maximize the number of
vertices in $R$ that have in-degree at least a target integer $a$.
}

\vs

Note that if $a=c=1$ this is simply the maximum bipartite
matching problem~\cite{LovaszPlummer1986}. If $a=1$ and $c > 1$, we
obtain a $b$-matching problem, that can be converted to a bipartite
matching problem~\cite{Gabow1983}.\vs

We now describe typical web graph characteristics by discussing the
sizes of $L$, $R$, $c$ and $a$ in practice. As noted before, in most
websites, a small number of `head' pages contribute to a significant
amount of the traffic while a long tail of the remaining pages
contribute to the rest~\cite{HubermanAdamic1999,
  DuDemmerBrewer2006, KumarNorrisSun2009}. As demonstrated by a prior
measurement. This is supported by our
own experience with the 80/20 rule, i.e. 80\% of a site's traffic is
captured by 20\% of the pages. Therefore, the ratio $k=|L|/|R|$ is
typically between $1/3$ to $1/5$, but may be even lower. \vs

By observing recommendations of Quora, Amazon, YouTube and our own
work at BloomReach, typical values for $c$ range from 3 to 20
recommendations per page. Values of $a$ are harder to nail down but it
typically ranges from $1$ to $5$. 

\subsection{Practical Requirements}

There are two key requirements in making graph algorithms
practical. The first is that the method used must be very simple to
implement, debug, deploy and most importantly maintain long-term. The second is that the method must scale
gracefully with larger sizes. \vs

Graph matching algorithms require linear memory and super-linear run-time
which does not scale well. For example, an e-commerce website of a
client of BloomReach with 1M product pages and 100 recommendation
candidates per product would require easily over 160GB in main memory to store the graph
and run matching algorithms; this can be reduced by using graph
compression techniques but that adds more technical difficulties in
development and maintenance. Algorithms that are time intensive
can sometimes be sped-up by using distributed computing techniques such as
map-reduce~\cite{DeanGhemawat2004}. However, efficient map-reduce
algorithms for graph problems are notoriously difficult. \vs

\subsection{Simple Solutions}

To circumvent the time and space complexity of implementing optimal
graph algorithms for the recommendation subgraph problem, we propose
the study of three simple solutions strategies that not only can be
shown to scale well in practice but also have good theoretical
properties that we demonstrate using approximation ratios.

\begin{itemize}

\item {\bf Sampling:} The first solution is a simple random sampling
  solution that selects a random subset of $c$ links out of the
  available $d$ from every page. Note that this solution requires no
  memory overhead to store these results a-priori and the
  recommendations can be generated using a random number generator on
  the fly. While this might seem trivial at first, for sufficient (and
  often real-world) values of $c$ and $a$ we show that this can be
  optimal. Furthermore, while we omit this result for brevity, our 
  approach can be extended to the case where the recommendation edges
  have graphs.

\item {\bf Greedy:} The second solution we propose is a greedy
  algorithm that chooses the recommendation links so as to maximize
  the number of nodes in $R$ that can accumulate $a$ in-links. In
  particular, we keep track of the number of in-links required for
  each node in $R$ to reach the target of $a$ and choose the links
  from each node in $L$ giving preference to adding links to nodes in
  $R$ that are closer to the target in-degree $a$.

\item {\bf Partition:} The third solution is inspired by a
  theoretically rigorous method to find optimal subgraphs in
  sufficiently dense graphs: it partitions the edges into $a$ subsets
  by random sub-sampling, such that there is a good chance of finding
  a perfect matching from $L$ to $R$ in each of the subsets. The union
  of the matchings so found will thus result in most nodes in $R$
  achieving the target degree $a$. We require the number of edges in
  the underlying graph to be significantly large for this method to
  work very well; moreover, we need to run a (near-)perfect matching
  algorithm in each of the subsets which is also a computationally
  expensive subroutine. Hence, even though this method works very well
  in dense graphs, it does not scale very well in terms of running
  time and space.
\end{itemize}

As a summary, the table below shows the time and space complexity
of our different algorithms.

\begin{figure}[H]
\centering
\begin{tabular}{l|l|l|l|l}
\cline{2-4}
                                    & Sampling                                        & Greedy   & Partition          &  \\ \cline{1-4}
\multicolumn{1}{|l|}{Time}          & $O(|E|)$                                        & $O(|E|)$ & $O(|E|\sqrt{|V|})$ &  \\ \cline{1-4}
\multicolumn{1}{|l|}{Working Space} & $O(1)$                                          & $O(V)$   & $O(|E|)$           &  \\ \cline{1-4}
\end{tabular}
\caption{Complexities of the different algorithms (assuming constant $a$ and $c$)}
\end{figure}

In the next section, we elaborate on these methods, their running
times, implementation details, and theoretical performance
guarantees. In the section after that, we present our comprehensive
empirical evaluations of all three methods, first the results on
simulated data and then the results on real data from some clients of
BloomReach. 
\section{Algorithms for Recommendation Subgraphs}

\subsection{The Sampling Algorithm}
\label{fixed-degree}

We present the sampling algorithm for the $(c,a)$-recommendation subgraph formally below.

\begin{algorithm}
  \SetAlgoLined
  \KwData{A bipartite graph $G=(L,R,E)$}
  \KwResult{A $(c,a)$-recommendation subgraph $H$}
  \For{$u$ in $L$}{
    $S \leftarrow$ a random sample of $c$ vertices without replacement in $N(u)$\;
    \For{$v$ in S} {
      $H \leftarrow H \cup \{(u,v)\}$\;
   	}
  }
  \Return $H$\;
  \caption{The sampling algorithm}
  \label{sampling-algo}
\end{algorithm} \vs

Given a bipartite graph $G$, the algorithm has runtime complexity of
$O(|E|)$ since every edge is considered at most once. The space
complexity can be taken to be $O(1)$, since the adjacency representation
of $G$ can be assumed to be pre-sorted by the endpoint of each edge in $L$.\vs

We next introduce a simple random graph model for the supergraph from
which we are allowed to choose recommendations and present a bound on
its expected performance when the underlying supergraph $G=(L,R,E)$ is
chosen probabilistically according to this model.
\vs

{\bf Fixed Degree Model:} In this model for generating the candidate
recommendation graph, each vertex $u\in L$ uniformly and independently
samples $d$ neighbors from $R$ with replacement. While this allows each
vertex in $L$ to have the same vertex as a neighbor multiple times, in
reality $r \gg d$ is so edge repetition is very unlikely.
This model is similar to, but is distinct from the more
commonly known Erd\"{o}s-Renyi model of random graphs~\cite{Janson2011}.
In particular, while the degree of each vertex in $L$ is fixed under
our model, concentration bounds can show that the degrees of the
vertices in $L$ would have similarly been concentrated around $d$ for $p=d/r$
in the Erd\"{o}s-Renyi model. We prove 
the following theorem about the performance of the Sampling Algorithm.
We denote the ratio of the size of $L$ and $R$ by $k$, i.e., we define
$k = \frac{l}{r}$.

\begin{thm}\label{original_result}
Let $S$ be the
random variable denoting the number of vertices $v \in R$ such that
$\deg_{H}(v)\geq a$ in the fixed-degree model. Then
\[ \emph{\E}[S] \geq r\left(1-e^{-ck+\frac{a-1}{r}}\frac{(ck)^a-1}{ck-1}\right)  \]
\end{thm}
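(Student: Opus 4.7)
The plan is to reduce the question to a single Binomial tail bound by linearity of expectation, and then apply two standard inequalities plus a geometric series.

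First I would argue that in the fixed-degree model, after the sampling algorithm is run, every edge of $H$ points to a uniformly random vertex of $R$ and all these choices are mutually independent. Indeed, each $u\in L$ generates $d$ neighbors i.i.d.\ uniform on $R$ (with replacement), and sampling $c$ of these $d$ indices without replacement still yields $c$ i.i.d.\ uniform samples. Edges originating at different $u$'s are independent by construction, so the $cl$ edges of $H$ (where $l=|L|$) are i.i.d.\ uniform over $R$.

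Next, fix any $v\in R$. Its in-degree $\deg_H(v)$ is a sum of $cl$ independent Bernoulli$(1/r)$ indicators, hence $\deg_H(v)\sim\mathrm{Binomial}(cl,1/r)$. By linearity of expectation,
\[
\E[S]=r\cdot\Pr[\deg_H(v)\ge a]=r\Bigl(1-\sum_{i=0}^{a-1}\binom{cl}{i}\Bigl(\tfrac{1}{r}\Bigr)^{i}\Bigl(1-\tfrac{1}{r}\Bigr)^{cl-i}\Bigr),
\]
so it suffices to upper-bound the lower tail by $e^{-ck+(a-1)/r}\frac{(ck)^{a}-1}{ck-1}$.

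For this I would apply the two textbook inequalities $\binom{n}{i}\le n^{i}/i!$ and $(1-x)^{m}\le e^{-xm}$. The first gives $\binom{cl}{i}(1/r)^{i}\le (cl/r)^{i}/i!=(ck)^{i}/i!$, and the second gives $(1-1/r)^{cl-i}\le e^{-(cl-i)/r}=e^{-ck+i/r}\le e^{-ck+(a-1)/r}$ for every $i\le a-1$. Pulling the worst-case exponential outside the sum yields
\[
\sum_{i=0}^{a-1}\binom{cl}{i}\Bigl(\tfrac{1}{r}\Bigr)^{i}\Bigl(1-\tfrac{1}{r}\Bigr)^{cl-i}\le e^{-ck+(a-1)/r}\sum_{i=0}^{a-1}\frac{(ck)^{i}}{i!}\le e^{-ck+(a-1)/r}\sum_{i=0}^{a-1}(ck)^{i},
\]
where the last step drops the $1/i!$ using $i!\ge1$. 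Finally, closing the geometric series $\sum_{i=0}^{a-1}(ck)^{i}=\frac{(ck)^{a}-1}{ck-1}$ delivers the stated bound after substituting back into the expression for $\E[S]$.

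The only subtle point I expect to have to argue carefully is the independence/uniformity claim in the first paragraph, since the fixed-degree model allows multi-edges and the sampling algorithm takes a without-replacement subsample of a with-replacement multiset; but once that reduction to i.i.d.\ Bernoulli trials is in hand, everything after is just the two inequalities above and a geometric sum, and no concentration machinery is needed.
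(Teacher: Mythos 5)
Your proposal is correct and follows essentially the same route as the paper: reduce $\deg_H(v)$ to a $\mathrm{Binomial}(cl,1/r)$ variable, bound the lower tail term by term via $\binom{cl}{i}(1/r)^i\le (ck)^i$ and $(1-1/r)^{cl-i}\le e^{-ck+(a-1)/r}$, and close the geometric series. The only cosmetic difference is how the with-replacement/without-replacement mismatch is handled --- you argue the subsample is still i.i.d.\ uniform, while the paper sidesteps the issue by analyzing a dominated with-replacement variant and settling for an inequality --- but the core argument is identical.
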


\begin{proof}
We will analyze the sampling algorithm as if it picks the neighbors of
each $u\in L$ with replacement, the same way the fixed-degree model
generates $G$. This variant would obviously waste some edges, and perform
worse than the variant which samples neighbors without replacement. This
means that any performance guarantee we prove for this variant holds
for our original statement of the algorithm as well. \vs

To prove the claim let $X_{v}$ be the random variable that represents
the degree of the vertex $v\in R$ in our chosen subgraph $H$. Because our
algorithm uniformly subsamples a uniformly random selection of edges,
we can assume that $H$ was generated the same way as $G$ but sampled $c$
instead of $d$ edges for each vertex $u\in L$. Since there are $cl$
edges in $H$ that can be incident on $v$, and each of these edges has a 
$1/r$ probability of being incident on a given vertex in $L$, we can now
calculate that

\begin{align*}
      \Pr[X_v = i]
&=    \binom{cl}{i} (1-\frac{1}{r})^{cl-i} \left(\frac{1}{r}\right)^i \\
&\leq (cl)^i (1-\frac{1}{r})^{cl-i} \left(\frac{1}{r}\right)^i
\end{align*}

Using a union bound, we can combine these inequalities to upper bound
the probability that $\deg_H(v)<a$.

\begin{align*}
      \Pr[X_v < a]
&=    \sum_{i=0}^{a-1} \binom{cl}{i} \left(1-\frac{1}{r}\right)^{cl-i}\left(\frac{1}{r}\right)^i \\
&\leq \sum_{i=0}^{a-1} \left(\frac{cl}{r}\right)^i\left(1-\frac{1}{r}\right)^{cl-i} \\
&\leq    \left(1-\frac{1}{r}\right)^{cl-(a-1)} \sum_{i=0}^{a-1} (ck)^i \\
&\leq \left(1-\frac{1}{r}\right)^{cl-(a-1)}\frac{(ck)^a-1}{ck-1} \\
&\leq e^{-ck+\frac{a-1}{r}} \frac{(ck)^a-1}{ck-1}
\end{align*}

Letting $Y_v = \left[X_v \geq a\right]$, we now see that

\[ \E[S] = \E\left[\sum_{v\in R} Y_v\right] \geq r\left(1-e^{-ck+\frac{a-1}{r}} \frac{(ck)^a-1}{ck-1}\right) \]
\end{proof} \vspace{-.2cm}

We can combine this lower bound with a trivial upper bound to obtain an
approximation ratio that holds in expectation.  

\begin{thm}
The above sampling algorithm gives a $\left(1-\frac1e\right)$-factor approximation to the $(c,1)$-graph recommendation problem in expectation.
\end{thm}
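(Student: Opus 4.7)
The plan is to combine the expected lower bound from Theorem~\ref{original_result} with a simple upper bound on the optimum, then do a two-case analysis.

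First I would specialize Theorem~\ref{original_result} to $a=1$. In that case $(a-1)/r = 0$ and $\frac{(ck)^a - 1}{ck-1} = 1$, so the theorem directly gives
\[ \E[S] \;\geq\; r\bigl(1 - e^{-ck}\bigr). \]
Next I would upper bound the optimal value $\mathrm{OPT}$ of any $(c,1)$-recommendation subgraph by $\min(r, cl) = r \cdot \min(1, ck)$: no solution can cover more than $|R|=r$ right vertices, and each of the $l$ left vertices contributes at most $c$ covering edges.

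The main (minor) obstacle is that $1-e^{-ck}$ becomes small when $ck$ is small, so comparing $\E[S]$ to $r$ only gives a good ratio in the regime $ck \geq 1$. I would therefore split into two cases, using the elementary fact that $f(x) = (1-e^{-x})/x$ is monotonically decreasing on $(0,\infty)$ with $f(1) = 1 - 1/e$. Case one: if $ck \geq 1$, then $\mathrm{OPT} \leq r$ and $1 - e^{-ck} \geq 1 - 1/e$, so $\E[S]/\mathrm{OPT} \geq 1 - 1/e$. Case two: if $ck < 1$, then $\mathrm{OPT} \leq ckr$, and by the monotonicity of $f$ we have $(1 - e^{-ck})/(ck) \geq 1 - 1/e$, hence $\E[S]/\mathrm{OPT} \geq r(1-e^{-ck})/(ckr) \geq 1 - 1/e$.

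Combining the two cases gives the claimed $(1 - 1/e)$-approximation in expectation. I expect the argument to be short; the only subtlety is remembering to use the $cl$ upper bound on $\mathrm{OPT}$ in the sparse regime $ck < 1$, since otherwise the bound from Theorem~\ref{original_result} alone is too weak to yield a constant-factor guarantee.
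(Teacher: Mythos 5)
Your proposal is correct and follows essentially the same route as the paper: lower-bounding $\E[S]$ by $r(1-e^{-ck})$ via the $a=1$ specialization, upper-bounding $\mathrm{OPT}$ by $r\min(ck,1)$, and splitting on whether $ck\geq 1$. If anything, your write-up is slightly more careful than the paper's, which states the upper bound as $r\max(ck,1)$ (an apparent typo for $\min$) and leaves the monotonicity of $(1-e^{-x})/x$ implicit.
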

\begin{proof}
The size of the optimal solution is bounded above by both the number
of edges in the graph and the number of vertices in $R$. The former of
these is $cl=ckr$ and the latter is $r$, which shows that the optimal solution size
$OPT \leq
r\max(ck,1)$. Therefore, by simple case analysis the approximation ratio
in expectation is at least $({1-\exp(-ck)})/\min(ck,1) \geq 1-\frac{1}{e} $
\end{proof}

For the $(c, 1)$-recommendation subgraph problem the approximation obtained by this sampling approach can be much better for certain values of $ck$. In particular,
if $ck>1$, then the approximation ratio is $1-\exp(-ck)$, which
approaches 1 as $ck\to\infty$. When $ck=3$, then the
solution will be at least 95\% as good as the optimal solution even
with our trivial bounds. Similarly, when $ck<1$, the approximation
ratio is $(1-\exp(-ck))/ck$ which also approaches 1 as $ck\to 0$. In
particular, if $ck=0.1$ then the solution will be at 95\% as good as
the optimal solution. The case when $ck=1$ represents the
worst case outcome for this model where we only guarantee 63\%
optimality. Figure~\ref{fig:simple_approx} shows the approximation ratio as a
function of $ck$ for the $(c,1)$-recommendation subgraph problem in the fixed degree model.\vs

\begin{figure}[H]
  \centering
  \includegraphics[width=0.34\textwidth]{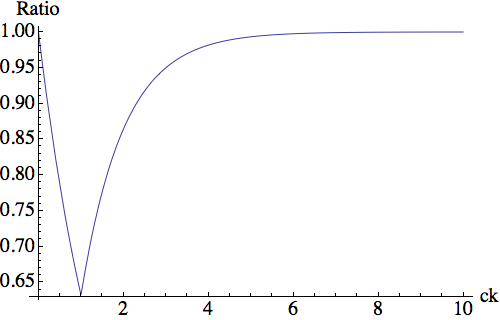}
  \caption{Approx ratio as a function of $ck$ }\label{fig:simple_approx}
\end{figure}

For the general $(c, a)$-recommendation subgraph problem, if $ck>a$,
then the problem is easy on average. This is in comparison to the
trivial estimate of $cl$. For a fixed $a$, a random solution gets
better as $ck$ increases because the decrease in $e^{-ck}$ more than
compensates for the polynomial in $ck$ next to it. However, in the
more realistic case, the undiscovered pages in $R$ too numerous to be all covered even if we used the full set of budgeted links allowed out of $L$, i.e. $cl < ra$ or rearranging, $ck<a$; in this case, we need to use the trivial estimate of
$ckr/a$, and the analysis for $a=1$ does not extend here. For practical purposes, the table
in Figure~\ref{a-values} shows how large $c$ needs to be (in terms of $k$) for the
solution to be 95\% optimal for different values of $a$, again in the fixed degree model.\vs

\begin{figure}[H]
  \centering
  \begin{tabular}{ |c|c|c|c|c|c| }
    \hline
    $a$ & 1 & 2 & 3 & 4 & 5 \\ \hline
    $c$ & $3.00k^{-1}$ & $4.74k^{-1}$ & $7.05k^{-1}$ & $10.01k^{-1}$ & $13.48k^{-1}$ \\
    \hline
  \end{tabular}
  \caption{The required $ck$ to obtain 95\% optimality for $(c, a)$-recommendation subgraph}
  \label{a-values}
\end{figure}

We close out this section by showing that the main result that holds
in expectation also hold with high probability for $a=1$, using the
following variant of Chernoff bounds.

\begin{thm}\label{negative_corr_chernoff}~\cite{AugerDoerr2011}
Let $X_1,\ldots, X_n$ be non-positively correlated variables. If $X=\sum_{i=1}^n X_i$, then for any $\delta\geq 0$
\[ \Pr[X \geq (1+\delta)\emph{\E}[X] ] \leq \left(\frac{e^\delta}{(1+\delta)^{1+\delta}}\right)^{\E[X]} \]
\end{thm}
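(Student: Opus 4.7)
The plan is to follow the classical multiplicative Chernoff derivation for the upper tail, swapping in the non-positive correlation hypothesis exactly where independence is usually used. First I would apply Markov's inequality to the exponentiated sum: for any free parameter $t>0$,
\[ \Pr[X \geq (1+\delta)\E[X]] \leq \frac{\E[e^{tX}]}{e^{t(1+\delta)\E[X]}}, \]
and expand $e^{tX}=\prod_i e^{tX_i}$, reducing the problem to controlling $\E\!\left[\prod_i e^{tX_i}\right]$.

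The pivotal step is showing that this product expectation still factorizes as an inequality under the weaker hypothesis. Because each map $x\mapsto e^{tx}$ is monotonically increasing, non-positive correlation of the $X_i$ yields
\[ \E\!\left[\prod_i e^{tX_i}\right] \leq \prod_i \E[e^{tX_i}]. \]
I would either take this form as the working definition of non-positive correlation, or derive it from the usual subset formulation $\E[\prod_{i\in S} X_i] \leq \prod_{i\in S}\E[X_i]$ by expanding each exponential as a nonnegative power series and applying the inequality termwise to the resulting products of indicators.

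With that factorization-as-inequality in hand, the rest is routine and follows the standard template. For $0/1$ variables with means $p_i$, one bounds $\E[e^{tX_i}] = 1 + (e^t-1)p_i \leq \exp((e^t-1)p_i)$, multiplies over $i$ to get $\E[e^{tX}] \leq \exp((e^t-1)\E[X])$, substitutes into the Markov bound, and optimizes by choosing $t=\ln(1+\delta)$. This reproduces the stated form $(e^\delta/(1+\delta)^{1+\delta})^{\E[X]}$.

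I expect the only real obstacle to be the product-expectation step: making rigorous that non-positive correlation (which is typically formulated for products of the $X_i$ themselves over subsets) upgrades to an inequality for products of monotone functions of the $X_i$. Once that lemma is in place, every other step is a line-for-line copy of the usual Chernoff argument, so there is no additional analytical work to do.
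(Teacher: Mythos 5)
The paper does not actually prove this statement: it is quoted as a known result from the cited reference and used as a black box, so there is no in-paper proof to compare yours against. That said, your reconstruction is essentially the standard proof of this bound from the literature on Chernoff bounds under negative dependence, and it is sound provided you make explicit two hypotheses that the statement leaves implicit (and that the cited source does impose): the $X_i$ are $\{0,1\}$-valued, and ``non-positively correlated'' means the all-subsets condition $\E[\prod_{i\in S}X_i]\leq\prod_{i\in S}\E[X_i]$, not merely pairwise nonpositive covariance. With those in place, the step you rightly flag as pivotal goes through cleanly: for binary $X_i$ one has $e^{tX_i}=1+(e^t-1)X_i$, so
\[
\E\Bigl[\textstyle\prod_i e^{tX_i}\Bigr]=\sum_{S}(e^t-1)^{|S|}\,\E\Bigl[\textstyle\prod_{i\in S}X_i\Bigr]\leq\sum_{S}(e^t-1)^{|S|}\prod_{i\in S}\E[X_i]=\prod_i\E[e^{tX_i}],
\]
using that $e^t-1>0$ for $t>0$. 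Be careful with your first justification of this step, though: monotonicity of $x\mapsto e^{tx}$ lets you factor product expectations only under \emph{negative association}, which is strictly stronger than nonpositive correlation; the power-series/indicator expansion you offer as the alternative is the argument that actually matches the hypothesis, so you should lead with it rather than treat it as a fallback. The remainder of your plan (Markov on $e^{tX}$, the bound $1+(e^t-1)p_i\leq e^{(e^t-1)p_i}$, optimizing at $t=\ln(1+\delta)$) is the routine Chernoff calculation and does reproduce the stated bound.
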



\begin{thm}
Let $S$ be the random variable denoting the number of vertices $v \in R$ such that $\deg_{H}(v)\geq 1$. Then
$ S \leq r(1-2\exp(-ck))$ with probability at most $(e/4)^{r(1-\exp(-ck))}$.
\end{thm}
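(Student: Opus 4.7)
Plan: I would reduce the lower-tail statement on $S$ to an upper-tail statement on the complementary sum $X = r - S$, which lets me invoke Theorem~\ref{negative_corr_chernoff} directly. Setting $X_v = 1 - Y_v = [\deg_H(v) = 0]$ and $X = \sum_{v\in R} X_v$, the event $\{S \leq r(1-2e^{-ck})\}$ is exactly $\{X \geq 2re^{-ck}\}$. From the computation already performed in the proof of Theorem~\ref{original_result}, $\E[X_v] = (1-1/r)^{cl} \leq e^{-ck}$, hence $\mu := \E[X] \leq re^{-ck}$.

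Before applying Theorem~\ref{negative_corr_chernoff}, I would verify non-positive correlation of the $X_v$'s. Because the $cl$ edges of $G$ are thrown independently into $R$ in the (with-replacement variant of the) fixed-degree model, the $X_v$'s are precisely the empty-bin indicators of a balls-and-bins process, and the direct pairwise calculation
\[
\Pr[X_{v_1} = X_{v_2} = 1] = \left(1 - \tfrac{2}{r}\right)^{cl} \leq \left(1 - \tfrac{1}{r}\right)^{2cl} = \Pr[X_{v_1}=1]\Pr[X_{v_2}=1],
\]
together with its higher-order analogs (or equivalently the classical negative association of empty-bin indicators), supplies the hypothesis.

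I would then apply Theorem~\ref{negative_corr_chernoff} with $\delta$ chosen so $(1+\delta)\mu = 2re^{-ck}$. A brief monotonicity check (differentiating the logarithm of the Chernoff expression in $\mu$ with $t=(1+\delta)\mu$ held fixed) shows that the bound is increasing in $\mu$ on the range $\mu \leq t$, so I may substitute the upper estimate $\mu^{\ast} = re^{-ck}$ for the exact (unknown) mean without weakening the inequality and take $\delta = 1$. Simplifying $e^{-\mu^{\ast}}(e/2)^{2\mu^{\ast}} = (e/4)^{\mu^{\ast}}$ then gives the claimed exponential decay.

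The main obstacle is exactly this last substitution: Theorem~\ref{negative_corr_chernoff} is stated in terms of the true expectation $\E[X]$, whereas my application only supplies an upper bound on it, so a short monotonicity argument is needed to pass cleanly from $\mu$ to $\mu^{\ast}$ and to confirm we remain in the regime $\mu \leq t$. The remaining ingredients---the coupling $X = r - S$, reusing the $e^{-ck}$ tail estimate from the proof of Theorem~\ref{original_result}, and the pairwise negative-correlation check---are routine.
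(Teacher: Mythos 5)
The paper gives no proof of this theorem at all---it is stated and then the text moves on---so the only meaningful comparison is against what the statement requires, and there your proposal has a real problem at the final step. Your reduction to $X = r - S$, the negative-association check for the empty-bin indicators, the application of Theorem~\ref{negative_corr_chernoff} with $\delta = 1$, and the monotonicity argument justifying the substitution of the upper estimate $\mu^{\ast} = re^{-ck}$ for the true mean are all sound, and this is surely the intended route given that the paper introduces the negatively-correlated Chernoff bound immediately beforehand. But the bound you actually obtain is $(e/4)^{r\exp(-ck)}$: the exponent is $\E[X] \approx r\exp(-ck)$, whereas the theorem asserts the exponent $r(1-\exp(-ck))$, which is (approximately) $\E[S]$, not $\E[X]$. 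These are very different quantities; for $ck > \ln 2$ the stated bound is strictly stronger than what you derived, so your closing claim that the simplification ``gives the claimed exponential decay'' is not correct.

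Moreover, this gap cannot be closed, because the theorem as printed is false---the exponent looks like a typo, or a mistaken substitution of $\E[S]$ for $\E[r-S]$ into the Chernoff bound. Concretely, take $cl = r\ln(2r)$, so that $re^{-ck} = 1/2$. Then $r(1-2e^{-ck}) = r-1$, and the event in question is simply that at least one vertex of $R$ remains uncovered; by the Poisson approximation for empty bins this happens with probability about $1 - e^{-1/2} \approx 0.39$, while the claimed bound $(e/4)^{r-1/2}$ tends to $0$ as $r \to \infty$. Your derivation, by contrast, yields the correct and provable statement $\Pr[S \leq r(1-2e^{-ck})] \leq (e/4)^{r\exp(-ck)}$, which evaluates to about $0.82$ in this example and is consistent with the truth. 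The right conclusion is that your argument proves a corrected version of the theorem with exponent $r\exp(-ck)$; you should flag the mismatch with the printed exponent rather than assert agreement with it.
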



For realistic scenarios where $r$ is very large, the above theorem gives very tight bounds on the size of the solution, also explaining the effectiveness of the simple sampling algorithm in such instances.

\subsection{The Greedy Algorithm}
\label{greedy}
We next analyze the natural greedy algorithm for constructing a $(c,a)$-recommendation
subgraph $H$ iteratively. In the following algorithm, we use $N(u)$ to refer to the neighbors
of a vertex $u$. \vspace{0.05in}

\begin{algorithm}[h]
  \SetAlgoLined
  \KwData{A bipartite graph $G=(L,R,E)$}
  \KwResult{A $(c,a)$-recommendation subgraph $H$}
  \For{$u$ in $L$}{
    $d[u] \leftarrow 0$
  }
  \For{$v$ in $R$}{
    $F \leftarrow \{u \in N(v) | d[u] < c\}$\;
    \If{$|F| \geq a$}{
      restrict $F$ to $a$ elements\;
      \For{$u$ in $F$}{
        $H \leftarrow H \cup \{(u,v)\}$\;
        $d[u] \leftarrow d[u]+1$\;
      }
   	}
  }
  \Return $H$\;
  \caption{The greedy Algorithm}
\end{algorithm}

The algorithm loops through each vertex in $R$, and considers each edge once.
Therefore, the runtime is $\Theta(|E|)$. Furthermore, the only data structure
we use is an array which keeps track of $\deg_H(u)$ for each $u\in L$, so the
memory consumption is $\Theta(|L|)$. Finally, we prove the following tight
approximation property of this algorithm.

\begin{thm}
The greedy algorithm gives a $1/(a+1)$-approximation to the $(c,a)$-graph
recommendation problem.
\end{thm}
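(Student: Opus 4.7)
The plan is a charging argument comparing the greedy output $H$ to an optimal solution $H^*$. Let $S_G \subseteq R$ be the set of vertices covered by greedy (those with in-degree at least $a$ in $H$) and let $S^* \subseteq R$ be the set covered by $H^*$. The goal is to establish $|S^*| \leq (a+1)|S_G|$, which immediately yields the claimed $1/(a+1)$ approximation ratio.

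First I would analyze the structure of an arbitrary $v \in S^* \setminus S_G$. Since $v$ is covered by $H^*$, it has at least $a$ distinct $L$-neighbors $u_1,\ldots,u_a$ attached to it via $H^*$-edges. Since greedy did not cover $v$, at the moment the outer loop reached $v$ the candidate set $F = \{u \in N(v) : d[u] < c\}$ had size strictly less than $a$; equivalently, at least $|N(v)| - (a-1)$ of $v$'s $L$-neighbors had already been saturated, i.e.\ had $d[u] = c$. Because $u_1,\ldots,u_a \subseteq N(v)$ while at most $a-1$ elements of $N(v)$ were unsaturated, a pigeonhole argument shows that at least one $u_j$ was already saturated when $v$ was processed. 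I would charge $v$ to such a saturated OPT-neighbor, breaking ties arbitrarily.

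Next I would bound the total number of charges in two complementary ways. On one side, any fixed $u \in L$ can be joined to at most $c$ distinct $R$-vertices in $H^*$, so $u$ receives at most $c$ charges. On the other side, every covered vertex of $S_G$ receives exactly $a$ greedy edges, so $|E(H)| = a|S_G|$, and since each saturated $L$-vertex contributes exactly $c$ of those edges, the number of saturated $L$-vertices is at most $a|S_G|/c$. Multiplying these bounds gives
\[
|S^* \setminus S_G| \;\leq\; c \cdot \frac{a|S_G|}{c} \;=\; a|S_G|,
\]
and therefore $|S^*| \leq |S_G| + a|S_G| = (a+1)|S_G|$.

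The main obstacle is the first step: carefully justifying that a saturated OPT-neighbor always exists for every $v \in S^* \setminus S_G$. This is where the algorithm's failure rule $|F| < a$ must be converted into a pigeonhole statement restricted to the specific $a$ neighbors used by $H^*$, and it is also where the parameter $a$ enters the final ratio $1/(a+1)$. Once this step is secured, the remainder is the routine double count of charges against the greedy capacity outlined above.
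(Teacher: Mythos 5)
Your proof is correct. The paper establishes the same bound by a forward, amortized argument: it treats the OPT-covered vertices of $R$ as a candidate pool and claims that each vertex greedy covers removes at most $a+1$ candidates from the pool --- the covered vertex itself plus, for each of the $a$ capacity units it consumes in $L$, at most one other OPT vertex that loses a needed edge. Your argument is the dual, post-hoc charging version: every OPT-covered vertex that greedy misses is charged to a saturated $H^*$-neighbor (which must exist, since $|F|<a$ when $v$ is processed while $v$ has $a$ OPT-neighbors in $N(v)$), each saturated $u$ absorbs at most $c$ charges because $\deg_{H^*}(u)\le c$, and the number of saturated $L$-vertices is at most $a|S_G|/c$ because greedy spends exactly $a$ edges per covered vertex. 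The two accountings yield the identical inequality $|S^*|\le(a+1)|S_G|$, but yours is the more rigorous rendering: the paper's assertion that greedy ``can continue as long as the candidate pool is not depleted'' glosses over the fact that the algorithm scans $R$ in a fixed order and cannot choose which candidate to take next, whereas your pigeonhole step converts the concrete failure condition $|F|<a$ directly into the existence of a saturated OPT-neighbor. One small point worth making explicit when you write this up: a vertex that is saturated when $v$ is processed remains saturated thereafter (the counters $d[u]$ only increase), so the set of charged vertices is indeed contained in the set of finally saturated vertices that the bound $a|S_G|/c$ counts.
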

\begin{proof}
Let $R_{GREEDY}, R_{OPT}\subseteq R$ be the set of vertices that have
degree $\geq a$ in the greedy and optimal solutions respectively. Note
that any $v \in R_{OPT}$ along with neighbors $\{u_1,\ldots u_a\}$
forms a set of candidate edges that can be used by the greedy
algorithm.
Each selection of the greedy algorithm might result in
some candidates becoming infeasible, but it can continue as long as the candidate pool is not depleted.
Each time the greedy algorithm selects some vertex $v\in
R$ with edges to $\{u_1,\ldots, u_a\}$, we remove $v$ from the candidate pool.
Furthermore each $u_i$ could have degree $c$ in the optimal solution and used each of its edges to make a neighbor attain degree $a$. The greedy choice of an edge to $u_i$ requires us to remove such an edge to an arbitrary vertex $v_i\in R$ adjacent to $u_i$ in the optimal
solution, and thus remove $v_i$ from further consideration in the candidate pool.
Therefore, at each step of
the greedy algorithm, we may remove at most $a+1$ vertices from
the candidate pool as illustrated in Figure 4. Since our candidate pool has size $OPT$, the
greedy algorithm can not stop before it has added $OPT/(a+1)$
vertices to the solution.
\end{proof}

\begin{figure}[H]
\label{fig:greedy}
\centering
\includegraphics[width=.39\textwidth]{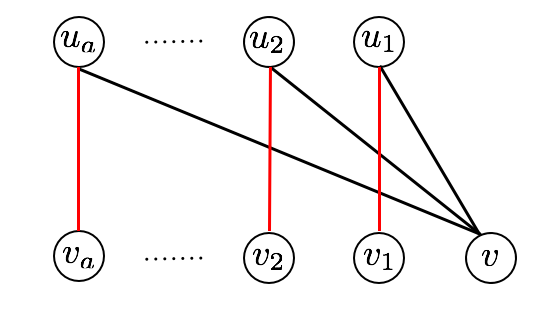}
\caption{One step of the greedy algorithm. When $v$ selects edges to $u_1,\ldots, u_a$, it can remove $v_1,\ldots, v_a$ from the pool of candidates that are available. The potentially invalidated edges are shown in red.}
\end{figure}

This approximation guarantee is as good as we can expect,
since for $a=1$ we recover the familiar $1/2$-approximation
of the greedy algorithm for matchings. Furthermore, even in the case of matchings ($a=1$),
randomizing the order in which the vertices are processed is still
known to leave a constant factor gap in the quality of the solution
~\cite{KarpVaziraniVazirani1990}. Despite this result, the greedy
algorithm fares much better when we analyze its expected performance.
Switching to the
{\bf Erd\"{o}s-Renyi model}~\cite{ErdosRenyi59} instead of the fixed degree model used in
the previous section, we now prove the near optimality of the
greedy algorithm for the $(c, a)$-recommendation subgraph problem.
Recall that in this model (sometimes referred to as $G_{n,p}$), each possible edge is inserted with probability $p$ independent of other edges. In our version $G_{l,r,p}$, we only add edges from $L$ to $R$ each with probability $p$ independent of other edges in this complete bipartite candidate graph.  For technical reasons, we need to assume that $lp \geq 1$ in the following theorem. However, this is a very weak assumption since $lp$ is simply the expected degree of a vertex $v\in L$. Typical values for $p$ for our applications will be $\Omega(\log(l)/l)$ making the expected degree $lp=\Omega(\log l)$.

\begin{thm}\label{greed-is-good}
Let $G=(L,R,E)$ be a graph drawn from the $G_{l,r,p}$ where $lp \geq 1$. If $S$ is the size of the $(c,a)$-recommendation subgraph produced by the greedy algorithm, then:
\[ \emph{\E}[S] \geq r - \frac{a(lp)^{a-1}}{(1-p)^a} \sum_{i=0}^{r-1}(1-p)^{l-\frac{ia}{c}}\]
\end{thm}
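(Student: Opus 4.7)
The plan is to analyze the greedy algorithm vertex-by-vertex over $R$, bound the probability that each $R$-vertex fails to contribute to $S$, and sum the failure probabilities via linearity of expectation. Fix the order in which the algorithm visits $R$, and for $i=0,1,\ldots,r-1$ let $T_i\subseteq L$ be the (random) set of saturated $L$-vertices after the first $i$ vertices of $R$ have been processed. Because every successful $R$-vertex contributes exactly $a$ edges and each $L$-vertex can absorb at most $c$ of them, we have $|T_i|\le ia/c$ deterministically, so at least $l-ia/c$ unsaturated $L$-candidates remain available when $v_{i+1}$ is visited.

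The crucial ingredient is the independence structure of $G_{l,r,p}$: the edges incident on $v_{i+1}$ are independent of all edges incident on $v_1,\ldots,v_i$, and hence independent of the random set $T_i$. Conditional on $T_i$, the number of unsaturated neighbors of $v_{i+1}$ is $\mathrm{Bin}(l-|T_i|,p)$. Coupling extra Bernoulli$(p)$ trials shows that $\Pr[\mathrm{Bin}(m,p)<a]$ is non-increasing in $m$, so the probability that $v_{i+1}$ fails is at most $\Pr[\mathrm{Bin}(l-ia/c,p)<a]$ regardless of the realization of $T_i$.

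To bound this binomial lower tail, set $m=l-ia/c$ and estimate each term of $\sum_{j=0}^{a-1}\binom{m}{j}p^j(1-p)^{m-j}$ by $\binom{m}{j}p^j\le (mp)^j\le (lp)^{a-1}$, using $\binom{m}{j}\le m^j$, $m\le l$, and the hypothesis $lp\ge 1$ to eliminate the dependence on $j$ in the exponent. Meanwhile $(1-p)^{m-j}\le (1-p)^{m-a}$ since $1-p\le 1$ and $m-j\ge m-a$ for $j\le a-1$. Summing the $a$ resulting terms gives a per-step failure bound of $a(lp)^{a-1}(1-p)^{l-ia/c-a}$, which is exactly the $i$-th summand in the theorem statement once the factor $(1-p)^{-a}$ is pulled out front. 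Linearity of expectation then yields $\E[S]=\sum_{i=0}^{r-1}\Pr[v_{i+1}\text{ succeeds}]\ge r-\sum_{i=0}^{r-1}\Pr[v_{i+1}\text{ fails}]$ and the claim follows.

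The main obstacle is the conditioning step: one needs to be clean about the fact that $T_i$ is itself random, determined by the edges between $L$ and $\{v_1,\ldots,v_i\}$ together with the algorithm's (arbitrary) tie-breaking, so its distribution is complicated. What rescues the argument is that $v_{i+1}$'s edges are drawn independently, so the binomial tail bound applies uniformly over every realization of $T_i$ and we never have to track the joint distribution of saturation patterns. Every other step (the deterministic bound on $|T_i|$, monotonicity of binomial tails in the number of trials, and the crude term-by-term tail estimate enabled by $lp\ge 1$) is routine once the independence observation is in place.
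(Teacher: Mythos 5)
Your proof is correct and follows essentially the same route as the paper's: process $R$ in order, bound the failure probability of the $(i+1)$-st vertex by the lower tail of $\mathrm{Bin}(l-ia/c,\,p)$ using the deterministic cap of $ia/c$ on saturated $L$-vertices, estimate that tail term by term via $\binom{m}{j}\le m^j$ and $lp\ge 1$, and finish with linearity of expectation. If anything, your explicit handling of the conditioning on the random saturated set $T_i$ (exploiting that $v_{i+1}$'s edges are independent of it, so the binomial bound holds uniformly over realizations) is more careful than the paper's informal ``worst condition'' argument.
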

\begin{proof}
Note that if edges are generated uniformly, we can consider the
graph as being revealed to us one vertex at a time as the greedy
algorithm runs. In particular, consider the event $X_{i+1}$ that the
greedy algorithm matches the $(i+1)^{st}$ vertex it inspects. While,
$X_{i+1}$ is dependent on $X_1,\ldots, X_i$, the worst condition for
$X_{i+1}$ is when all the previous $i$ vertices were from the same
vertices in $L$, which are now not available for matching the
$(i+1)^{st}$ vertex. The maximum number of such invalidated vertices
is at most $\lceil ia/c \rceil$. Therefore, the bad event
is that we have fewer than $a$ of the at least $l-\lceil ia/c \rceil$ 
available vertices having an edge to this vertex. The probability of this
bad event is at most $\Pr[Y\sim Bin(l-\frac{ia}{c},p): Y < a]$, the
probability that a Binomial random variable with $l - \frac{ia}{c}$
trials of probability $p$ of success for each trial has less than $a$
successes. We can bound this probability by using a union bound and
upper-bounding $\Pr[Y\sim Bin(l-\frac{ia}{c},p): Y = t]$ for each 
$0 \leq t \leq a-1$. By using the trivial estimate that
$\binom{n}{i} \leq n^i$ for all $n$ and $i$, we obtain:

\begin{align*}
      \Pr[Y\sim Bin(l-\frac{ia}{c},p): Y = t]
&=    \binom{l-\frac{ia}{c}}{t} (1-p)^{l-\frac{ia}{c}-t}p^{t} \\
&\leq \left(l-\frac{ia}{c}\right)^t (1-p)^{l-\frac{ia}{c}-t} p^{t} \\
&\leq (lp)^t (1-p)^{l-\frac{ia}{c}-t} 
\end{align*}

Notice that the largest exponent $lp$ can take within the bounds of
our sum is $a-1$. Similarly, the smallest exponent $(1-p)$ can take
within the bounds of our sum is $l-\frac{ia}{c}-a+1$. Now applying
the union bound gives:

\begin{align*}
&           \Pr[Y\sim Bin(l-\frac{ia}{c},p): Y < a] \\
&\leq  \sum_{t=0}^{a-1} \Pr[Y\sim Bin(l-\frac{ia}{c},p): Y = t] \\
&\leq  \sum_{t=0}^{a-1} (lp)^t (1-p)^{l-\frac{ia}{c}-t} \\
&=     a(lp)^{a-1} (1-p)^{l-\frac{ia}{c}-a+1}
\end{align*}

Finally, summing over all the $X_i$ using the linearity of
expectation and this upper bound, we obtain

\begin{align*}
      \E[S]
&\geq r - \sum_{i=0}^{r-1} \E[\lnot X_i] \\
&\geq r - \sum_{i=0}^{r-1} \Pr[Y \sim Bin(l-\frac{ia}{c},p): Y < a] \\
&\geq r - a(lp)^{a-1}\sum_{i=0}^{r-1}(1-p)^{l-\frac{ia}{c}-a+1}
\end{align*}
\end{proof}

Asymptotically, this result explains why the greedy
algorithm does much better in expectation than $1/(a+1)$ guarantee we
can prove in the worst case. In particular, for a reasonable setting of
the right parameters, we can prove that the error term of our greedy
approximation will be sublinear.

\begin{thm}
Let $G=(L,R,E)$ be a graph drawn from the $G_{l,r,p}$ where $p = \frac{\gamma \log l}{l}$ for some $\gamma \geq 1$. Suppose that $c, a$ and $\epsilon>0$ are such that $lc=(1+\epsilon)ra$ and that $l$ and $r$ go to infinity while satisfying this relation. If $S$ is the size of the $(c,a)$-recommendation subgraph produced by the greedy algorithm, then

\[ \E[S] \geq r - o(r) \]
\end{thm}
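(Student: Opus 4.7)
The plan is to directly substitute the specified parameters into the bound from Theorem \ref{greed-is-good} and show that the error term is $o(r)$. Writing
\[ E := \frac{a(lp)^{a-1}}{(1-p)^a}\sum_{i=0}^{r-1}(1-p)^{l-ia/c}, \]
Theorem \ref{greed-is-good} gives $\E[S] \geq r - E$, so it suffices to prove $E = o(r)$. The hypothesis $\gamma \geq 1$ ensures $lp = \gamma\log l \geq 1$ for large enough $l$, so the previous theorem applies.

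First I would handle the prefactor: $(lp)^{a-1} = (\gamma\log l)^{a-1}$ is polylogarithmic in $l$, and $(1-p)^{-a} = 1 + o(1)$ since $p \to 0$. The key computation is then to estimate the geometric sum
\[ \sum_{i=0}^{r-1}(1-p)^{l-ia/c} = (1-p)^l \cdot \frac{(1-p)^{-ra/c} - 1}{(1-p)^{-a/c} - 1}. \]
The scaling $lc = (1+\epsilon)ra$ rewrites as $ra/c = l/(1+\epsilon)$, hence
\[ (1-p)^l \cdot (1-p)^{-ra/c} = (1-p)^{l\epsilon/(1+\epsilon)} \leq e^{-pl\epsilon/(1+\epsilon)} = l^{-\gamma\epsilon/(1+\epsilon)}. \]
A first-order Taylor expansion gives $(1-p)^{-a/c}-1 = \Theta(pa/c)$, so the whole sum is $O\!\left(\tfrac{cl}{\gamma a \log l} \cdot l^{-\gamma\epsilon/(1+\epsilon)}\right)$.

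Combining with the prefactor and using $cl = (1+\epsilon)ra$ yields
\[ E = O\!\left(r \cdot (\log l)^{a-2} \cdot l^{-\gamma\epsilon/(1+\epsilon)}\right). \]
Since $\gamma\epsilon/(1+\epsilon) > 0$ is a positive constant, the polynomial decay in $l$ dominates any polylogarithmic factor, so $E / r \to 0$ and thus $\E[S] \geq r - o(r)$.

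The main obstacle is tracking the various $1+o(1)$ factors cleanly through the exponential and Taylor approximations, in particular verifying $(1-p)^{-a/c}-1 = \Theta(pa/c)$. This holds because $p \to 0$ and $a/c = (l/r)/(1+\epsilon)$ remains bounded under the scaling. Everything else is routine algebraic manipulation exploiting that polynomial decay in $l$ beats any polylogarithmic growth.
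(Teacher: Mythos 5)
Your proposal is correct and follows essentially the same route as the paper: substitute the stated parameters into Theorem~\ref{greed-is-good} and show the error term is $o(r)$ because the relation $lc=(1+\epsilon)ra$ forces polynomial decay $l^{-\Theta(\epsilon/(1+\epsilon))}$ that dominates the polylogarithmic prefactor. The only difference is that you evaluate the geometric sum in closed form (gaining an inessential extra $1/\log l$ factor), whereas the paper bounds every term by the largest one, $l^{-\epsilon/(1+\epsilon)}$, and multiplies by $r$; both yield the same conclusion.
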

\begin{proof}
We will prove this claim by applying Theorem \ref{greed-is-good}. Note that it suffices to prove that $(lp)^{a-1}\sum_{i=0}^{r-1}(1-p)^{l-\frac{ia}{c}} = o(r)$ since the other terms are just constants. We first bound the elements of this summation. Using the facts that $p = \frac{\gamma \log l}{l}$, $lc/a = (1+\epsilon)r$ and that $i<r$ throughout the summation, we get the following bound on each term:

\begin{align*}
       (1-p)^{l-\frac{ia}{c}} 
&\leq  \left(1-\frac{\gamma \log l}{l}\right)^{l-\frac{ia}{c}} \\
&\leq  \exp\left(-\frac{\gamma \log l}{l}\left(l-\frac{ia}{c}\right)\right) \\ 
&=     \exp\left((-\log l)\left(\gamma - \frac{ia}{lc}\right)\right) \\
&=     l^{-\gamma+\frac{ia}{lc}} = l^{-\gamma+\frac{i}{(1+\epsilon)r}} \\
&\leq  l^{-1+\frac{1}{1+\epsilon}} = l^{-\frac{\epsilon}{1+\epsilon}}
\end{align*}

Finally, we can evaluate the whole sum:

\begin{align*}
       (lp)^{a-1}  \sum_{i=0}^{r-1}(1-p)^{l-\frac{ia}{c}}
&\leq  \left(\log^{a-1}l\right) \sum_{i=0}^{r-1}l^{-\frac{\epsilon}{1+\epsilon}}\\
&\leq  \left(\log^{a-1}l\right) rl^{-\frac{\epsilon}{1+\epsilon}}\\
&=     \left(\log^{a-1}l\right) \frac{c}{(1+\epsilon)a}l^{1-\frac{\epsilon}{1+\epsilon}} = o(l)\\
\end{align*}

However, since $r$ is a constant times $l$, any function that is $o(l)$ is also $o(r)$ and this proves the claim.

\end{proof}

\subsection{The Partition Algorithm}
To motivate the partition algorithm, we first define optimal solutions for the recommendation subgraph problem.
\vs

{\bf Perfect Recommendation Subgraphs:} We define a \emph{perfect} $(c,a)$-recommendation subgraph on $G$ to be a subgraph $H$ such that
$deg_H(u)\leq c$ for all $u\in L$ and $deg_H(v)=a$ for
$\min(r,\lfloor cl/a \rfloor)$ of the vertices in $R$.
\vs

The reason we define perfect $(c,a)$-recommendation subgraphs is that when one
exists, it's possible to recover it in polynomial time using a min-cost
$b$-matching algorithm (matchings with a specified degree $b$ on each vertex)
for any setting of $a$ and $c$. However, implementations of $b$-matching
algorithms often incur significant overheads even over regular bipartite matchings.
This motivates a solution that uses regular bipartite matching algorithms to find
an approximately optimal solution given that a perfect one exists. \vs

We do this by proving a sufficient condition for perfect $(c,a)$-recommendation
subgraphs to exist with high probability in a bipartite graph $G$ under the
{\bf Erd\"os-Renyi model}~\cite{ErdosRenyi59} where edges are sampled uniformly and
independently with probability $p$. This argument then guides our formulation of
a heuristic that overlays matchings carefully to obtain $(c,a)$-recommendation
subgraphs. \vs

\begin{thm}\cite{Janson2011}
\label{random_matching_threshold}
Let $G$ be a bipartite graph drawn from $G_{n, n, p}$. If $p \geq \frac{\log n -
\log\log n}{n}$, then as $n\to\infty$,  the probability that G has a perfect
    matching approaches 1.
\end{thm}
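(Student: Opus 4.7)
The canonical route is through Hall's marriage theorem: since $|L|=|R|=n$, the graph $G$ has a perfect matching if and only if every $S\subseteq L$ satisfies $|N(S)|\geq |S|$. My plan is to show that a Hall violation occurs with probability $o(1)$ by union-bounding over minimal witnesses to such a violation.

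A minimal witness is a pair $(S,T)$ with $S\subseteq L$, $T\subseteq R$, $|S|=k$, $|T|=k-1$, and no edge from $S$ into $R\setminus T$. For a fixed such pair the ``no edge'' event has probability $(1-p)^{k(n-k+1)}$, so
$$\Pr[G\text{ has no perfect matching}]\ \leq\ \sum_{k=1}^{n}\binom{n}{k}\binom{n}{k-1}(1-p)^{k(n-k+1)}.$$
I would then split this sum into three ranges of $k$. In the \emph{moderate} range (say $\log n \ll k \leq n/2$), use the crude bounds $\binom{n}{k}\binom{n}{k-1}\leq (en/k)^{2k}$ and $1-p\leq e^{-p}$; substituting the hypothesis on $p$ makes the exponent $-pk(n-k+1)$ dominate the $(en/k)^{2k}$ factor, so this contribution is geometrically small. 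In the \emph{large}-$k$ range ($k>n/2$), swap the roles of $L$ and $R$ and reduce to the small-$k$ case by symmetry. In the \emph{small}-$k$ range, bound $\binom{n}{k}$ and $\binom{n}{k-1}$ by $n^k$ and estimate each term term-by-term.

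The main obstacle is the small-$k$ regime, in particular the $k=1$ term, which counts isolated vertices in $L$. At the stated density $n(1-p)^n \approx n e^{-np}$ is borderline, so a naive first-moment bound on this single term is essentially where all the delicacy of the classical threshold result lives; once this term and its $k=1$ symmetric counterpart in $R$ are handled (either by the sharp Erd\H{o}s--R\'enyi style computation, or by conditioning on the absence of isolated vertices and showing the residual Hall violations contribute $o(1)$), the remaining $k\geq 2$ terms in the small range are straightforwardly summed via $\binom{n}{k}\leq n^k/k!$ against the rapidly decaying $(1-p)^{k(n-k+1)}$.

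Because the theorem appears verbatim as a standard ingredient in Janson--\L{}uczak--Ruci\'nski, in practice I would carry out the moderate- and large-$k$ bounds explicitly to show where the bulk of the probability mass lies, and then cite the sharp analysis of the isolated-vertex term from the reference rather than re-derive the threshold constant.
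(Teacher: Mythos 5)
The paper does not actually prove this statement: it is quoted verbatim with a citation to Janson, {\L}uczak and Ruci\'nski, so there is no in-paper argument to compare yours against. Your outline is the standard route (Hall's condition plus a first-moment union bound over minimal violating pairs $(S,T)$), and the moderate- and large-$k$ ranges go through essentially as you describe.

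The genuine gap is the $k=1$ term that you call ``borderline'' and defer to the reference. If you push that computation through, you find that the statement as printed cannot be proved because it is false at the stated density. With $p=(\log n-\log\log n)/n$, the expected number of isolated vertices in $L$ is $n(1-p)^{n}=(1+o(1))\,ne^{-np}=(1+o(1))\log n\to\infty$; moreover the isolation events for distinct vertices of $L$ involve disjoint edge sets and are therefore independent, so this count is Binomial and concentrates around its mean. Hence with high probability $G$ has isolated vertices and no perfect matching. The threshold the reference actually establishes is $np-\log n\to+\infty$ (for instance $p\ge(\log n+\log\log n)/n$ suffices), at which point $ne^{-np}\to 0$ and the $k=1$ term, together with the rest of your union bound, is $o(1)$. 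So your instinct that all the delicacy lives in the isolated-vertex term is correct, but the resolution is not a sharper estimate borrowed from the reference --- it is that the sign in front of $\log\log n$ in the hypothesis must be $+$, not $-$. With that correction, your sketch is the standard and correct argument; as written, no proof of the quoted statement can succeed.
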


We will prove that a perfect $(c,a)$-recommendation subgraph exists in
random graphs with high probability by building it up from $a$
matchings each of which must exist with high probability if $p$ is
sufficiently high. To find these matchings, we identify subsets of size
$l$ in $R$ that we can perfectly match to $L$. These subsets overlap,
and we choose them so that each vertex in $R$ is in $a$ subsets. 

\begin{thm}\label{perfect}
Let $G$ be a random graph drawn from $G_{l, r, p}$ with $p\geq a\frac{\log l-\log\log
l}{l}$ then the probability that $G$ has a perfect $(c, a)$-recommendation
subgraph tends to 1 as $l,r\to\infty$.
\end{thm}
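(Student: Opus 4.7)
The plan is to construct a perfect $(c,a)$-recommendation subgraph as the edge-disjoint union of $a$ matchings and to invoke Theorem~\ref{random_matching_threshold} to guarantee that each matching exists with high probability. I would randomly $a$-color every edge of $G$ independently and uniformly; by the independence of edges in the Erd\"{o}s-Renyi model, the $i$-th color class is distributed as an independent copy of $G_{l,r,p/a}$. Because the hypothesis gives $p/a \geq (\log l - \log \log l)/l$, each color class meets the threshold of Theorem~\ref{random_matching_threshold}.

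For each color class $i \in \{1,\ldots,a\}$, I would designate a target subset $A_i \subseteq R$ with $|A_i|=l$ so that the collection $A_1, \ldots, A_a$ covers a pre-chosen set of $\min(r,\lfloor cl/a\rfloor)$ ``distinguished'' $R$-vertices exactly $a$ times each; a cyclic round-robin assignment of distinguished $R$-positions to the $a$ color classes gives such a cover in the principal case $r \geq l$ and $c \geq a$. Each color class $i$ then only needs to contain a perfect matching between $L$ and $A_i$, which by the choice $|A_i|=l$ is a question about a $G_{l,l,p/a}$ instance; Theorem~\ref{random_matching_threshold} produces this matching with probability tending to $1$. A union bound over the constant number $a$ of color classes shows that all $a$ matchings coexist with probability $1-o(1)$, and since they live in different color classes their union is edge-disjoint, yielding each distinguished $R$-vertex at degree exactly $a$ and each $L$-vertex at degree exactly $a \leq c$.

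The main obstacle I expect is pinning down the combinatorial design of the target subsets $A_i$ so that the union has the prescribed degree sequence. In the principal case above the round-robin cover is clean, but the asymmetric regimes --- $r < l$ (in which the roles of $L$ and $R$ must be swapped when applying Theorem~\ref{random_matching_threshold}) and $cl < ar$ (in which only a strict subset of $R$ may receive degree $a$, so care is needed to keep non-distinguished $R$-vertices from inadvertently picking up degree exactly $a$) --- require separate arguments. The probabilistic content of the proof, however, is fully captured by Theorem~\ref{random_matching_threshold} applied once per color class, so once the combinatorial cover is pinned down the remainder is a routine union bound over the constantly many color classes.
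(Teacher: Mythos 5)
There is a genuine gap, and it is a counting problem at the heart of your combinatorial design. You build the subgraph as a union of $a$ perfect matchings from $L$, so your final graph has exactly $al$ edges and every vertex of $L$ has degree $a$. But a \emph{perfect} $(c,a)$-recommendation subgraph must give degree $a$ to $\min(r,\lfloor cl/a\rfloor)$ vertices of $R$, which requires roughly $cl$ edges, not $al$. Equivalently, the cover you defer to the end cannot exist: you want $a$ target sets $A_1,\ldots,A_a$, each of size $l$, to cover $\lfloor cl/a\rfloor$ distinguished vertices exactly $a$ times each, but the total number of incidences available is $a\cdot l$ while the number required is $a\cdot\lfloor cl/a\rfloor\approx cl$. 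For $c>a$ (the interesting regime) these do not match, so the round-robin design you sketch is impossible, and your construction only certifies that $l$ vertices of $R$ --- not $\lfloor cl/a\rfloor$ --- reach degree $a$. In effect you have proved existence of a perfect $(a,a)$-recommendation subgraph, a strictly weaker statement.

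The fix, which is what the paper does, is to use $c$ matchings rather than $a$. Pad or truncate $R$ to a set $R'$ of size $lc/a$, enumerate it cyclically, and take $c$ overlapping windows $R_1,\ldots,R_c$ each of size $l$, arranged so that every vertex of $R'$ lies in exactly $a$ of the $c$ windows. Superimposing one perfect matching of $(L,R_i)$ per window then gives every $L$-vertex degree $c$ and every $R'$-vertex degree $a$, with the edge count $cl$ coming out right. Your probabilistic step survives essentially unchanged and is the part you got right: even with $c$ classes, each edge is eligible only for the $a$ windows containing its right endpoint, so assigning it uniformly to one of those $a$ makes each $(L,R_i)$ marginally a $G_{l,l,p/a}$ instance, which meets the threshold of Theorem~\ref{random_matching_threshold} under the hypothesis $p\geq a(\log l-\log\log l)/l$; a union bound over the constantly many (now $c$) matchings finishes the argument.
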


\begin{proof}
We start by either padding or restricting $R$ to a set of $\frac{lc}{a}$ before we
start our analysis. If $r\geq\frac{lc}{a}$, then we restrict $R$
to an arbitrary subset $R'$ of size $\frac{lc}{a}$. Since induced subgraphs of
Erd\"{o}s-Renyi graphs are also Erd\"{o}s-Renyi graphs, we can instead
apply our analysis to the induced subgraph. Since the optimal
solution has size bounded above by $\frac{lc}{a}$ a perfect $(c,a)$-recommendation
subgraph in $G[L,R']$ will imply a perfect recommendation subgraph in $G[L,R]$. \vs

On the other hand, if $r \leq\frac{lc}{a}$, then we can pad $R$ with $\frac{lc}{a}-r$ 
dummy vertices and adding an edge from each such vertex to each vertex in $L$
with probability $p$. We call the resulting right side of the graph $R'$.
Note that $G[L,R']$ is still generated by the Erd\"{o}s-Renyi process. Further,
since the original graph $G[L,R]$ is a subgraph of this new graph, if we prove
the existence of a perfect $(c,a)$-recommendation subgraph in this new graph, it
will imply the existence of a perfect recommendation subgraph in $G[L,R]$. \vs

Having picked an $R'$ satisfying $|R'|=\frac{lc}{a}$, we pick an enumeration 
of the vertices in $R'=\{v_0,\ldots, v_{lc/a-1}\}$
and add each of these vertices into $a$ subsets as follows. Define
$R_i = \{v_{(i-1)l/a}, \ldots, v_{(i-1)l/a+l-1}\}$ for each $1\leq i\leq c$ where
the arithmetic in the indices is done modulo $lc/a$. Note both $L$ and all of
the $R_i$'s have size $l$. \vs

Using these new sets we define the graphs $G_i$ on the bipartitions
$(L, R_i)$. Since the sets $R_i$ are intersecting, we cannot define the
graphs $G_i$ to be induced subgraphs. However, note that each vertex $v\in R'$
falls into exactly $a$ of these subsets. \vs

Therefore, we can uniformly randomly assign each edge in $G$ to one of $a$ graphs among $\{G_1,\ldots, G_c\}$ it can fall into,
and make each of those graphs a random graph. In fact, while the different
$G_i$ are coupled, taken in isolation we can consider any single $G_i$ to be
drawn from the distribution $G_{l,l,p/a}$ since $G$ was drawn from $G_{l,r,p}$.
Since $p/a \geq (\log l - \log\log l)/l$ by assumption, we conclude by
Theorem~\ref{random_matching_threshold}, the probability that a particular
$G_i$ has no perfect matching is $o(1)$. \vs

If we fix $c$, we can conclude by a union bound that except
for a $o(1)$ probability, each one of the $G_i$'s has a perfect matching. By
superimposing all of these perfect matchings, we can see that every vertex in
$R'$ has degree $a$. Since each vertex in $L$ is in exactly $c$ matchings, each
vertex in $L$ has degree $c$. It follows that except for a $o(1)$ probability
there exists a $(c,a)$-recommendation subgraph in $G$.
\end{proof}

{\bf Approximation Algorithm Using Perfect Matchings:}
The above result now enables us to design a near linear time
algorithm with a $(1-\epsilon)$ approximation guarantee
to the $(c,a)$-recommendation subgraph problem by leveraging
combinatorial properties of matchings. In particular, we use
the fact a matching that does not have augmenting paths of
length $>2\alpha$ is a $1-1/\alpha$ approximation to the maximum
matching problem. We call this method the Partition Algorithm,
and we outline it below.

\begin{algorithm}[h]\label{partition_alg}
  \SetAlgoLined
  \KwData{A bipartite graph $G=(L,R,E)$}
  \KwResult{A (c,a)-recommendation subgraph $H$}
  $R' \leftarrow$ a random sample of $|L|c/a$ vertices from $R$\;
  Choose $G[L,R_1],\ldots,G[L,R_c]$ as in Theorem \ref{perfect}\;
  \For{$i$ in [1..n]} {
    $M_i \leftarrow$ A matching of $G[L,R_i]$ with no augmenting path of length $2c/\epsilon$\;
  }
  $H \leftarrow M_1\bigcup\ldots \bigcup M_c$\;
  \Return $H$\;
  \caption{The partition algorithm}
\end{algorithm}
\vspace{-.2cm}

\begin{thm}
Let $G$ be drawn from $G_{l,r,p}$ where $p \geq a\frac{\log l - \log\log l}{l}$.
Then Algorithm 3 finds a $(1-\epsilon)$-approximation
in $O(\frac{|E|}{\epsilon})$ time with probability $1-o(1)$.
\end{thm}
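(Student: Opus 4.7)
My plan is to combine Theorem~\ref{perfect} with the classical Hopcroft--Karp-style tradeoff between augmenting-path length and matching approximation, and then lift per-subgraph matching guarantees to a global $(c,a)$-recommendation subgraph bound by a simple counting argument.

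First I would reuse the setup of Theorem~\ref{perfect}: after padding or restricting to $|R'| = lc/a$, uniformly assigning each edge of $G$ to one of the $a$ candidate subgraphs it may belong to makes each $G[L,R_i]$ marginally distributed as $G_{l,l,p/a}$. Since $p/a \geq (\log l - \log\log l)/l$, Theorem~\ref{random_matching_threshold} together with a union bound over the constant number of subgraphs ensures that, with probability $1-o(1)$, every $G[L,R_i]$ admits a perfect matching of size $l$.

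Conditioning on this event, I would apply the standard result that a matching with no augmenting path of length at most $2\alpha$ is a $(1-1/\alpha)$-approximation to the maximum matching. Taking $\alpha = c/\epsilon$, each $M_i$ returned by the algorithm misses at most $\epsilon l/c$ vertices of $R_i$. Each $v \in R'$ belongs to exactly $a$ of the sets $R_i$, and $v$ fails to reach degree $a$ in $H$ only if it is missed by at least one of those $a$ matchings; summing the miss counts gives at most $c \cdot \epsilon l / c = \epsilon l$ unhappy vertices. Comparing with the optimum $|R'| = lc/a$ yields an approximation ratio of at least $1 - \epsilon a/c \geq 1-\epsilon$ under the natural assumption $a \leq c$. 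For the runtime, truncating a Hopcroft--Karp-style procedure after $O(c/\epsilon)$ phases computes each such matching in $O(|E|/\epsilon)$ time, and summing over the $c$ subgraphs (with $c$ constant) preserves the $O(|E|/\epsilon)$ total.

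The step I expect to be the main obstacle is establishing cleanly that each $G[L,R_i]$ behaves like $G_{l,l,p/a}$ for the purpose of invoking the matching threshold, because the $c$ subgraphs are coupled through the shared edge set of $G$. As in the proof of Theorem~\ref{perfect}, however, only the marginal distribution of each $G_i$ is needed to apply the matching-existence threshold, and a union bound over the constant-many subgraphs absorbs the dependence.
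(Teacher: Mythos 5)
Your proposal is correct and follows essentially the same route as the paper: invoke Theorem~\ref{perfect} to get perfect matchings in each $G[L,R_i]$ with probability $1-o(1)$, use the standard fact that a matching with no augmenting path of length $2c/\epsilon$ is a $(1-\epsilon/c)$-approximation, and bound the unhappy vertices of $R'$ by the union of the at most $\epsilon l/c$ misses per matching. Your accounting of the final ratio ($1-\epsilon a/c$ against $|R'|=lc/a$, noting $a\leq c$) is in fact slightly more explicit than the paper's one-line version of the same count.
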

\begin{proof}
Using the previous theorem, we know that each of the graphs $G_i$ has a
perfect matching with high probability. These perfect matchings
can be approximated to a $1-\epsilon/c$ factor by finding matchings
that do not have augmenting paths of length $\geq 2c/\epsilon$
~\cite{LovaszPlummer1986}. This can be done for each $G_i$ in
$O(|E|c/\epsilon)$ time. Furthermore, the union of unmatched vertices
makes up an at most $c(\epsilon/c)$ fraction of $R'$, which proves the claim.
\end{proof}

Notice that if we were to run the augmenting paths algorithm to completeness
for each matching $M_i$, then this algorithm would take $O(|E||L|)$ time. We
could reduce this further to $O(|E|\sqrt{L})$ by using Hopcroft-Karp.
\cite{HopcroftKarp} \vs

Assuming
a sparse graph where $|E|=\Theta(|L|\log|L|)$, the time complexity of this algorithm is $\Theta(|L|^{3/2}\log|L|)$.  The space complexity
is only $\Theta(|E|) = \Theta(|L|\log|L|)$, but a large constant is hidden by
the big-Oh notation that makes this algorithm impractical in real test cases. 
\section{Experimental Results}

\subsection{Simulated Data}
We simulated performance of our algorithms on random graphs generated
by the graph models we outlined. 
In the following figures, each data
point is obtained by averaging the measurements over 100 random
graphs. We first present the time and space usage of these algorithms when
solving a $(10,3)$-recommendation subgraph problem in different sized graphs.
In all our charts, error bars are present, but too small to be noticeable.
Note that varying the value of $a$ and $c$ would only change space and time
usage by a constant, so these two graphs are indicative of time and space
usage over all ranges of parameters. The code used conduct these experiments 
can be found at \url{https://github.com/srinathsridhar/graph-matching-source} \vs

\begin{figure}
\centering
\begin{minipage}[h]{.48\textwidth}
\centering
\includegraphics[width=.99\textwidth]{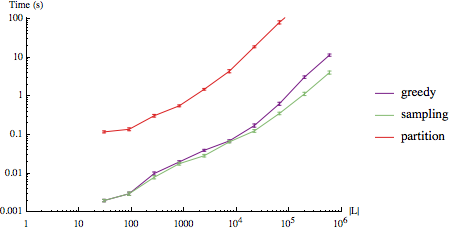}
\vspace{-0.2in}
\caption{Time needed to solve a (10,3)-recommendation problem in random graphs where $|R|/|L|=4$ (Notice the log-log scale.)}\label{fig:time_graph}
\end{minipage}
\vspace{.2cm}
\hspace{0cm}
\begin{minipage}[h]{.48\textwidth}
\centering
\includegraphics[width=.99\textwidth]{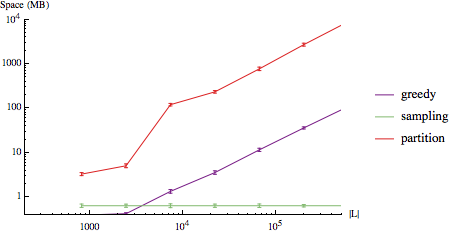}
\vspace{-0.2in}
\caption{Space needed to solve a (10,3)-recommendation problem in random graphs where $|R|/|L|=4$ (Notice the log-log scale.)}\label{fig:space_graph}
\end{minipage}
\vspace{-0.2in}
\end{figure} 

Recall that the partition algorithm split the graph into multiple graphs
and found matchings (using an implementation of Hopcroft-Karp ~\cite{HopcroftKarp}) 
in these smaller graphs which were then combined into
a recommendation subgraph. For this reason, a run of the partition
algorithm takes much longer to solve a problem instance than either the
sampling or greedy algorithms. It also takes significantly more memory as can be seen in Figures 5 and 6.

Compare this to greedy and sampling which both require a single pass over
the graph, and no advanced data structures. In fact, if the edges
of $G$ is pre-sorted by the edge's endpoint in $L$, then the sampling algorithm can be
implemented as an online algorithm with constant space and in constant time 
per link selection. Similarly, if the edges of $G$
is pre-sorted by the edge's endpoint in $R$, then the greedy algorithm can
be implemented so that the entire graph does not have to be kept in memory. In this
event, greedy uses only $O(|L|)$ memory. \vs

Next, we analyze the relative qualities of the solutions each method
produces.  Figures~\ref{fig:a=1:1} and \ref{fig:a=1:2} plot the
average performance ratio of the three methods compared to the trivial
upper bounds as the value of $c$, the number of recommendations
allowed is varied, while keeping $a = 1$.

They collectively show that the lower bound we calculated for the
expected performance of the sampling algorithm accurately captures its
behavior when $a=1$. Indeed, the inequality
we used is an accurate approximation of the expectation, up to lower
order terms, as is demonstrated in these simulated runs.  The random
sampling algorithm does well, both when $c$ is low and high, but
falters when $ck=1$. The greedy algorithm outperforms the
sampling algorithm in all cases, but its advantage vanishes as
$c$ gets larger. Note that the dip in the graphs when $cl=ar$, at
$c=4$ in Figure~\ref{fig:a=1:1} and $c=2$ in Figure~\ref{fig:a=1:2} is
expected and was previously demonstrated in
Figure~\ref{fig:simple_approx}.  The partition algorithm is immune to
this drop that affects both the greedy and the sampling algorithms,
but comes with the cost of higher time and space utilization.\vs

\begin{figure}
\centering
\begin{minipage}[h]{0.48\textwidth}
\centering
\includegraphics[width=0.99\textwidth]{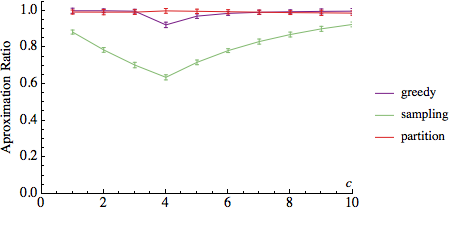}
\vspace{-1cm}
\caption{Solution quality for the $(c, 1)$-recommendation subgraph problem in graphs with $|L|=25$k, $|R|=100$k, $d=20$}\label{fig:a=1:1}
\end{minipage}
\vspace{.2cm}
\hspace{0cm}
\begin{minipage}[h]{0.48\textwidth}
\centering
\includegraphics[width=0.99\textwidth]{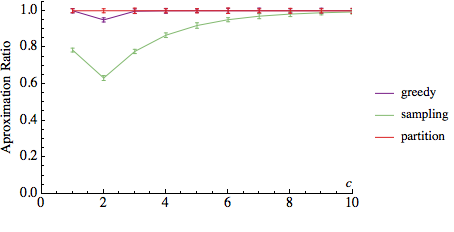}
\vspace{-1cm}
\caption{Solution quality for the $(c, 1)$-recommendation subgraph problem in graphs with $|L|=50$k, $|R|=100$k, $d=20$}\label{fig:a=1:2}
\end{minipage}
\vspace{-.8cm}
\end{figure}

In contrast to the case when $a=1$, the sampling algorithm
performs worse when $a>1$ but performs increasingly better with $c$ as
demonstrated by Figures~\ref{fig:a=2} and \ref{fig:a=4}. The greedy
algorithm continues to produce solutions that are nearly optimal,
regardless of the settings of $c$ and $a$, even beating the
partition algorithm with increasing values of $a$. Our simulations
suggest that in most cases, one can simply use our sampling method for solving the $(c, a)$-recommendation subgraph problem. In cases where the sampling is not suitable as flagged
by our analysis, we still find that the greedy performs adequately and
is also simple to implement. These two algorithms thus confirm to our
requirements we initially laid out for deployment in large-scale real
systems in practice.

\begin{figure}
\centering
\begin{minipage}[h]{0.48\textwidth}
\centering
\includegraphics[width=0.99\textwidth]{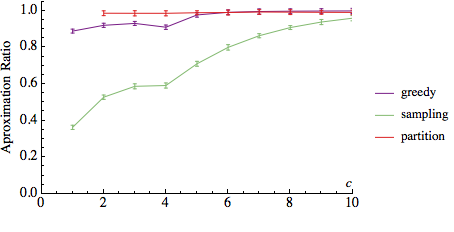}
\vspace{-1cm}
\caption{Solution quality for the $(c, 2)$-recommendation subgraph problem in graphs with $|L|=50$k, $|R|=100$k, $d=20$}\label{fig:a=2}
\end{minipage}
\vspace{.2cm}
\hspace{0cm}
\begin{minipage}[h]{0.48\textwidth}
\centering
\includegraphics[width=0.99\textwidth]{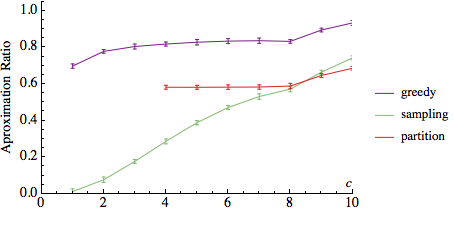}
\vspace{-1cm}
\caption{Solution quality for the $(c, 4)$-recommendation subgraph problem in graphs with $|L|=50$k, $|R|=100$k, $d=20$}\label{fig:a=4}
\end{minipage}
\vspace{-.8cm}
\end{figure}

To summarize, our synthetic experiments show the following strengths of each algorithm: \vs

\textbf{Sampling Algorithm:} Sampling uses little to no memory and can
be implemented as an online algorithm. If keeping the underlying graph in
memory is an issue, then chances are this algorithm will do well while only needing
a fraction of the resources the other two algorithms would need. \vs

\textbf{Partition Algorithm:} This algorithm does well, but only when $a$ is small.
In particular, when $a=1$ or 2, partition seems to be the best algorithm, but the quality
of the solutions degrade quickly after that point. However this performance comes at
expense of significant runtime and space. Since greedy performs almost as well without
requiring large amounts of space or time, partition is best suited for instances where
$a$ is low the quality of the solution is more important than anything else. \vs

\textbf{Greedy Algorithm:} This algorithm is the all-round best performing algorithm we tested.
It only requires a single pass over the data thus very quickly,  and uses
relatively little amounts of space enabling it run completely in memory for graphs with
as many as tens of millions of edges. It is not as fast as sampling or accurate as partition
when $a$ is small, but it has very good performance over all parameter ranges. 

\subsection{Real Data}

We now present the results of running our algorithms on several real
datasets. In the graphs that we use, each node corresponds to a single
product in the catalog of a merchant and the edges connect similar
products. For each product up to 50 most similar products were
selected by a proprietary algorithm of BloomReach that uses text-based
features such as keywords, color, brand, gender (where applicable) as
well as user browsing patterns to determine the similarity between
pairs of products. Such algorithms are commonly used in e-commerce websites
such as Amazon, Overstock, eBay etc to display the most related products 
to the user when they are browsing a specific product. \vs

Two of the client merchants of BloomReach presented here had
moderate-sized relation graphs with about $10^5$ vertices and $10^6$
input edges (candidate recommendations); the remaining merchants (3, 4
and 5) have on the order of $10^6$ vertices and $10^7$ input edges
between them.  We estimated an upper bound on the optimum solution by
taking the minimum of $|L|c/a$ and the number of vertices in $R$ of
degree at least $a$. Figures~\ref{fig:real_a=1},~\ref{fig:real_a=2}
and~\ref{fig:real_a=3} plot the average of the optimality percentage
of the sampling, greedy and partition algorithms across all the
merchants respectively. Note that we could only run the partition
algorithm for the first two merchants due to memory constraints. \vs

\begin{figure}[h]
\centering
\begin{minipage}[h]{0.48\textwidth}
\centering
\includegraphics[width=0.99\textwidth]{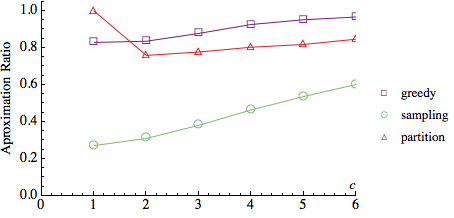}

\caption{Solution quality for the $(c, 1)$-recommendation subgraph problem in retailer data}
\label{fig:real_a=1}
\end{minipage}

\hspace{0cm}
\begin{minipage}[h]{0.48\textwidth}
\centering
\includegraphics[width=0.99\textwidth]{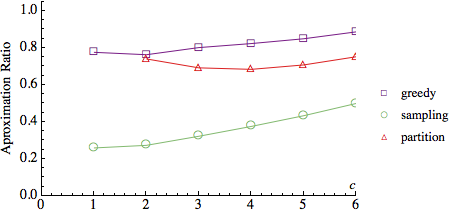}

\caption{Solution quality for the $(c, 2)$-recommendation subgraph problem in retailer data}
\label{fig:real_a=2}
\end{minipage}

\hspace{0cm}
\begin{minipage}[h]{0.48\textwidth}
\centering
\includegraphics[width=0.99\textwidth]{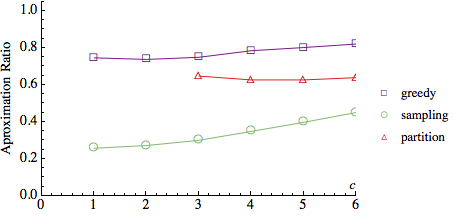}

\caption{Solution quality for the $(c, 3)$-recommendation subgraph problem in retailer data}
\label{fig:real_a=3}
\end{minipage}
\vspace{-.3cm}
\end{figure} \vs

From these results, we can see that that greedy performs exceptionally
well when $c$ gets even moderately large.  For the realistic value of
$c=6$, the greedy algorithm produced a solution that was 85\% optimal
for all the merchants we tested. For several of the merchants, its
results were almost optimal starting from $a=2$. \vs

The partition method is also promising, especially when the $a$ value
that is targeted is low. Indeed, when $a=1$ or $a=2$, its performance
is comparable or better than greedy, though the difference is not as pronounced as
it is in the simulations. However, for larger values of $a$ the partition
algorithm performs worse. 

The sampling algorithm performs mostly well on real data,
especially when $c$ is large. It is typically worse than
greedy, but unlike the partition algorithm, its performance improves
dramatically as $c$ becomes larger, and its performance does not worsen
as quickly when $a$ gets larger. Therefore, for large $c$ 
sampling becomes a viable alternative to greedy mainly in cases where the
linear memory cost of the greedy algorithm is too prohibitive.

\vspace{-0.1in}
\section{Summary and Future Work}
We have presented a new class of structural recommendation problems
cast as subgraph selection problems, and analyzed three algorithmic
strategies to solve these problems because graph matching algorithms can be
prohibitive to implement in real-world scenarios. The sampling method is most
efficient, the greedy approach trades off computational cost with
quality, and the partition method is effective for smaller problem
sizes. We have proved effective theoretical bounds on the quality
of these methods, and also substantiated them with experimental
validation both from simulated data and real data from
retail web sites. Our findings have been very useful in the
deployment of effective structural recommendations in web relevance
engines that drive many of the leading websites of popular retailers. \vs

Our sampling method and its analysis extends to more general
models of random graphs: in one version, we can consider
hierarchical models that take into account the product hierarchy
trees under which the pages in $L$ and $R$ are situated. A second
version considers a Cartesian product model where the pages in $L$
and $R$ are partitioned into closely related blocks and the graph
induced between every pair of left-right blocks follows a fixed
degree random model. A third variant models the potential flow of
customer traffic over each possible recommended edge from a left to
right page with nonnegative weights, and the resulting problem is
to find a subgraph where the number of right nodes with at least a
certain minimum amount of recommended traffic. Validating these
more general models by fitting real life data to them as well as
corroborating the performance of various methods in simulated and
real data for these models could yield an even better understanding
of our suggested algorithmic strategies for the 
recommendation subgraph problem.\vs

{\bf Acknowledgments:} We thank Alan Frieze and Ashutosh Garg for helpful
discussions.

\vspace{-.05in}

\setcounter{page}{1}
\bibliography{main.bbl}{}

\begin{thebibliography}{10}

\bibitem{Adomavicius2005}
G.~Adomavicius and A.~Tuzhilin.
\newblock Toward the next generation of recommender systems: A survey of the
  state-of-the-art and possible extensions.
\newblock {\em IEEE Trans. on Knowl. and Data Eng.}, 17(6):734--749, June 2005.

\bibitem{almazro2010survey}
D.~Almazro, G.~Shahatah, L.~Albdulkarim, M.~Kherees, R.~Martinez, and
  W.~Nzoukou.
\newblock A survey paper on recommender systems.
\newblock {\em arXiv preprint arXiv:1006.5278}, 2010.

\bibitem{Anderson2006}
C.~Anderson.
\newblock {\em The Long Tail: Why the Future of Business Is Selling Less of
  More}.
\newblock Hyperion, 2006.

\bibitem{AugerDoerr2011}
A.~Auger and B.~Doerr.
\newblock {\em Theory of Randomized Search Heuristics: Foundations and Recent
  Developments}.
\newblock Series on Theoretical Computer Science. World Scientific Publishing
  Company, 2011.

\bibitem{Bush45aswe}
V.~Bush.
\newblock As we may think.
\newblock {\em Atlantic Monthly}, 176:101--108, 1945.

\bibitem{google-collab-filtering}
A.~S. Das, M.~Datar, A.~Garg, and S.~Rajaram.
\newblock Google news personalization: scalable online collaborative filtering.
\newblock In {\em Proceedings of the 16th international conference on World
  Wide Web}, pages 271--280. ACM, 2007.

\bibitem{DeanGhemawat2004}
J.~Dean and S.~Ghemawat.
\newblock Mapreduce: simplified data processing on large clusters.
\newblock In {\em OSDI '04: Proceedings of the sixth conference on symposium on
  operating systems design and implementation}. USENIX Association, 2004.

\bibitem{DuDemmerBrewer2006}
B.~Du, M.~Demmer, and E.~Brewer.
\newblock Analysis of www traffic in cambodia and ghana.
\newblock In {\em Proceedings of the 15th international conference on World
  Wide Web}, WWW '06, pages 771--780. ACM, 2006.

\bibitem{ErdosRenyi59}
P.~Erd\"{o}s and A.~R\'{e}nyi.
\newblock {On random graphs, I}.
\newblock {\em Publicationes Mathematicae}, 6:290--297, 1959.

\bibitem{Gabow1983}
H.~Gabow.
\newblock An efficient reduction technique for degree-constrained subgraph and
  bidirected network flow problems.
\newblock In {\em Proceedings of the fifteenth annual ACM symposium on Theory
  of computing}, STOC '83, pages 448--456. ACM, 1983.

\bibitem{twitter-collab-filtering}
J.~Hannon, M.~Bennett, and B.~Smyth.
\newblock Recommending twitter users to follow using content and collaborative
  filtering approaches.
\newblock In {\em Proceedings of the fourth ACM conference on Recommender
  systems}, pages 199--206. ACM, 2010.

\bibitem{HopcroftKarp}
J.~E. Hopcroft and R.~M. Karp.
\newblock An n\^{}5/2 algorithm for maximum matchings in bipartite graphs.
\newblock {\em SIAM Journal on computing}, 2(4):225--231, 1973.

\bibitem{HubermanAdamic1999}
B.~A. Huberman and L.~A. Adamic.
\newblock Internet: growth dynamics of the world-wide web.
\newblock {\em Nature}, 401(6749):131--131, 1999.

\bibitem{WebRelevanceEngine}
BloomReach Inc.
\newblock Inside the technology: Web relevance engine.

\bibitem{Janson2011}
S.~Janson, T.~Luczak, and A.~Rucinski.
\newblock {\em Random Graphs}.
\newblock Wiley Series in Discrete Mathematics and Optimization. Wiley, 2011.

\bibitem{KarpVaziraniVazirani1990}
R.~M. Karp, U.~Vazirani, and V.~Vazirani.
\newblock An optimal algorithm for on-line bipartite matching.
\newblock In {\em Proceedings of the twenty-second annual ACM symposium on
  Theory of computing}. ACM, 1990.

\bibitem{KumarNorrisSun2009}
C.~Kumar, J.~B. Norris, and Y.~Sun.
\newblock Location and time do matter: A long tail study of website requests.
\newblock {\em Decision Support Systems}, 47(4):500--507, 2009.

\bibitem{amazon-collab-filtering}
G.~Linden, B.~Smith, and J.~York.
\newblock Amazon. com recommendations: Item-to-item collaborative filtering.
\newblock {\em Internet Computing, IEEE}, 7(1):76--80, 2003.

\bibitem{LovaszPlummer1986}
L~Lov{\'a}sz and M.~D. Plummer.
\newblock {\em Matching theory}.
\newblock North-Holland mathematics studies. Akad{\'e}miai Kiad{\'o}, 1986.

\bibitem{big-data-book13}
V.~Mayer-Sch{\"o}nberger and K.~Cukier.
\newblock {\em Big Data: A Revolution That Will Transform How We Live, Work,
  and Think}.
\newblock Houghton Mifflin Harcourt, 2013.

\bibitem{Resnick1997}
P.~Resnick and H.~R. Varian.
\newblock Recommender systems.
\newblock {\em Communications of the ACM}, 40(3):56--58, 1997.

\bibitem{Schafer1999}
J.~B. Schafer, J.~Konstan, and J.~Riedi.
\newblock Recommender systems in e-commerce.
\newblock In {\em Proceedings of the 1st ACM conference on Electronic
  commerce}, EC '99, pages 158--166. ACM, 1999.

\end{thebibliography}
\bibliographystyle{plain}

\end{document}